\documentclass[a4paper,11pt]{article}
\textwidth 160mm
\textheight 228mm
\topmargin -10mm
\oddsidemargin 0mm

\usepackage{cite}
\usepackage[all]{xy}
\usepackage{amsmath,amsfonts,amssymb,amsthm,epsfig,amscd,comment,latexsym,psfrag}
\usepackage{CJK}
\usepackage{mathrsfs}
\usepackage{nicefrac,xspace,tikz}
\usepackage{arydshln}
\usepackage{stmaryrd}
\usepackage{graphicx,bm}
\usepackage{amscd}
\usetikzlibrary{arrows}
\newtheorem{theorem}{Theorem}

\allowdisplaybreaks


\begin{document}
\begin{center}
{\Large\bf Higher order constraints for the ($\beta$-deformed) Hermitian matrix models}\vskip .2in
{\large Rui Wang$^{a,}$\footnote{wangrui@cumtb.edu.cn}} \vskip .2in
$^a${\em Department of Mathematics, China University of Mining and Technology,
Beijing 100083, China}\\

\begin{abstract}
We construct the ($\beta$-deformed) higher order total derivative operators and analyze their remarkable properties.
In terms of these operators, we derive the higher order constraints for the ($\beta$-deformed) Hermitian matrix models.
We prove that these ($\beta$-deformed) higher order constraints are reducible to the Virasoro constraints.
Meanwhile, the Itoyama-Matsuo conjecture for the constraints of the Hermitian matrix model is proved.
We also find that through rescaling variable transformations, two sets of the constraint operators become the $W$-operators of $W$-representations
for the ($\beta$-deformed) partition function hierarchies in the literature.
\end{abstract}

\end{center}

{\small Keywords: Matrix Models, Conformal and $W$ Symmetry}


\section{Introduction}

Matrix models usually satisfy infinite sets of constraints, which can be formulated in a form of differential equations with respect to the time variables.
These constraints can be considered as equations of motion in the general string theory associated with the model.
For the Hermitian matrix model
\begin{equation}\label{PF}
Z=\int d^Nz\ \ \Delta(z)^2 e^{\sum_{i=1}^{N}V(z_i)},
\end{equation}
where $\Delta(z)=\prod\limits_{1\leq i<j\leq N}(z_i-z_j)$ is the Vandermonde determinant,
$V(z)=\sum_{m=0}^{\infty}t_mz^m$,
there are the well-known Virasoro constraints \cite{Mironov,David,Ambjorn,Itoyama255,Dijkgraaf,Marshakov1991}
\begin{equation}
\hat L_nZ=0,\ \ n\geq -1,
\end{equation}
where
\begin{equation}\label{virconop}
\hat L_n=\sum_{k=1}^{\infty}kt_k\frac{\partial}{\partial t_{k+n}}+
\sum_{k=0}^{n}\frac{\partial}{\partial t_{k}}
\frac{\partial}{\partial t_{n-k}}.
\end{equation}
They may be derived from the equality
$\int d^Nz\ \ L_n\Delta(z)^2 e^{\sum_{i=1}^{N}V(z_i)}=0$,
where $L_n=\sum_{i=1}^{N}\frac{\partial}{\partial z_i}z_i^{n+1}$.

Itoyama and Matsuo \cite{Itoyama} proposed an approach to derive a large class of constraints, namely $W_{1+\infty}$ constraints,
which are associated with the higher order differential operators of the $W_{1+\infty}$ algebra.
More precisely, by inserting the $W_{1+\infty}$ operators $\mathcal{D}_{n+r,r}=\sum_{i=1}^{N}z_i^{n+r}\frac{\partial^r}{\partial z_i^r}$ and
$\mathcal{D}^{\dag}_{n+r,r}=(-1)^r\sum_{i=1}^{N}\frac{\partial^r}{\partial z_i^r}z_i^{n+r}$ ($r, n+r\in \mathbb{Z}_{+}$), the derived $W_{1+\infty}$ constraints are
\begin{equation}\label{dnm}
\tilde W_n^rZ=\int d^Nz\ \ \Delta \cdot\mathcal{D}_{n+r,r}(e^V\Delta)
-(\mathcal{D}^{\dag}_{n+r,r}\Delta)\cdot e^V\Delta =0.
\end{equation}
However, it is rather nontrivial to write down the constraints explicitly.
Itoyama and Matsuo conjectured that the constraint operators $\tilde W_n^r$ with $r\geq 2$
are reducible to the Virasoro constraint operators \cite{Itoyama}.
The modified techniques were applied to the Kontsevich model partition functions \cite{Khviengia}.
It was shown that the super eigenvalue model satisfies an infinite set of constraints with even spins which are associated with the bosonic generators of
the super ($W_{\frac{\infty}{2}}\oplus W_{\frac{1+\infty}{2}}$) algebra, and the simplest constraints ($s=4$) are reducible to the super Virasoro constraints \cite{Buffon}.

$W$-representations of matrix models realize the partition
functions by acting on elementary functions with exponents of the given $W$-operators
\cite{Shakirov2009,Kon-Wit,BGW,Cassia2020,MirGKM}.
Recently, the partition function hierarchies were presented by the expansions with respect to the symmetric functions via the $W$-representations \cite{wangliu,qtwlzz,swlzz,gwlzz}. The partition function hierarchies expanded by the Schur
functions can be described by the interpolating two-matrix model \cite{Alexandrov23,2301.04107,2301.11877}.
For the $\beta$-deformed partition function hierarchies in \cite{wangliu}, their integral realizations and Ward identities were presented
by means of  $\beta$-deformed Harish-Chandra-Itzykson-Zuber integral \cite{Oreshina1,Oreshina2,Oreshina3}.
For the generalized $\beta$ and $(q,t)$-deformed partition function hierarchies, it was found that there are the profound interrelations between them and the $4d$ and $5d$ Nekrasov partition functions \cite{gwlzz}.

The Hermitian matrix models can be represented as the integrated conformal field theory expectation values.
Recently, by constructing the operators in terms of the generators of the Heisenberg algebra and inserting them into the integrated expectation values,
the new constraints were presented \cite{Wangr2022}, where the $A$ and $B$-type Lassalle constraints are contained in them. It was shown that
these new constraints can be derived by inserting the second order total derivative operators
\begin{equation}\label{Mnz}
\bar W_{n}=\sum_{i=1}^{N}\frac{\partial^2}{\partial z^2_i}z^{n+2}_i
-2\sum_{i=1}^{N}\sum_{j\neq i}\frac{\partial}{\partial z_i}\frac{z^{n+2}_i}{z_i-z_j}
-(n+2)\sum_{i=1}^{N}\frac{\partial}{\partial z_i}z^{n+1}_i, \ \ n\geq -2,
\end{equation}
into the integrand of (\ref{PF}).
The interesting property of these constraint operators is that via rescaling variable transformations,
they give the $W$-operators of $W$-representations of some well-known matrix models, such as
the Gaussian Hermitian matrix model (in the external field) \cite{Shakirov2009,AMironov1705}, $N \times N$ complex
\cite{AlexandrovJHEP2009,AMironov1705} and Hurwitz-Kontsevich \cite{Goulden97,KHurwitz,Shakirov2009} matrix models.

In this paper, we will make a further step to investigate the constraints for the ($\beta$-deformed) Hermitian matrix models.
The goal of this paper is to construct the higher order constraints for these matrix models and
prove Itoyama-Matsuo conjecture.

This paper is organized as follows. In section 2, we construct the higher order total derivatives. By means of these operators, we derive the higher order constraints for the Hermitian matrix model. In terms of these constraints, we prove the Itoyama-Matsuo conjecture. We also point out the intrinsic connection between one set of the constraint operators and the $W$-operators of $W$-representations
for the partition function hierarchies in the literature.
In section 3, the higher order constraints for the $\beta$-deformed Hermitian matrix model
which are reducible to the Virasoro constraints are constructed. We find that one set of the derived
constraint operators are associated with the $W$-operators of $W$-representations
for the $\beta$-deformed partition function hierarchies in the literature.
We end this paper with the conclusion in section 4.

\section{Higher order constraints for the Hermitian matrix model}

Let us start with the second order total derivative operator $\bar W_0$ in (\ref{Mnz})
\begin{equation}\label{Mnz0}
\bar W_{0}=\sum_{i=1}^{N}\frac{\partial^2}{\partial z^2_i}z^{2}_i
-2\sum_{i=1}^{N}\sum_{j\neq i}\frac{\partial}{\partial z_i}\frac{z^{2}_i}{z_i-z_j}
-2\sum_{i=1}^{N}\frac{\partial}{\partial z_i}z_i.
\end{equation}

We construct a set of commutative operators in terms of (\ref{Mnz0})
\begin{equation}\label{Hnm}
H_n^{(m)}=\frac{(-1)^{n-1}}{(n-1)!}{\rm ad}^{n-1}_{F_{m+1}}F_{m},\ \ \ n,m \in \mathbb{Z}_{+},
\end{equation}
where $F_{m}={\rm ad}^{m-1}_{-\frac{1}{2}\bar {W}_0}{F}_{1}$ and
$F_1=L_{-1}=\sum_{i=1}^{N}\frac{\partial}{\partial z_i}$.

For examples,
\begin{eqnarray}
H_{1}^{(2)}&=&F_2=\sum_{i=1}^{N}\frac{\partial^2}{\partial z^2_i}z_i
-2\sum_{i=1}^{N}\sum_{j\neq i}\frac{\partial}{\partial z_i}\frac{z_i}{z_i-z_j}-F_1,\nonumber\\
H_{1}^{(3)}&=&F_3=\sum_{i=1}^N\frac{\partial^3 }{\partial z_i^3}z_i^{2}
-3\sum_{i=1}^N\sum_{j\neq i}\frac{\partial^2 }{\partial z_i^2}\frac{z_i^{2}}{z_i-z_j}+3\sum_{i=1}^N\sum_{k\neq j\neq i}
\frac{\partial }{\partial z_i}\frac{z_i^{2}}{(z_i-z_j)(z_i-z_k)}\nonumber\\
&&-3F_2-2F_1,\nonumber\\
H_{2}^{(1)}&=&\sum_{i=1}^{N}\frac{\partial^2}{\partial z^2_i}
-2\sum_{i=1}^{N}\sum_{j\neq i}\frac{\partial}{\partial z_i}\frac{1}{z_i-z_j},\nonumber\\
H_{2}^{(2)}&=&\sum_{i=1}^N\frac{\partial^4 }{\partial z_i^4}z_i^{2}-4\sum_{i=1}^N\sum_{j\neq i}
\frac{\partial^3 }{\partial z_i^3}\frac{z_i^{2}}{z_i-z_j}
+3\sum_{i=1}^N\sum_{k\neq j\neq i}\frac{\partial^2 }{\partial z^2_i}\frac{z_i^{2}}{(z_i-z_j)(z_i-z_k)}
\nonumber\\
&&-12\sum_{i=1}^N\sum_{l\neq k\neq j\neq i}\frac{\partial}{\partial z_i}\frac{z^2_i}{(z_i-z_j)
(z_i-z_k)(z_i-z_l)}-2\sum_{i=1}^N\frac{\partial^3}{\partial z^3_i}z_i\nonumber\\
&&-6\sum_{i=1}^N\sum_{k\neq j\neq i}\frac{\partial }{\partial z_i}\frac{z_i}{(z_i-z_j)(z_i-z_k)}
-12\sum_{i=1}^N\sum_{k\neq j\neq i}\frac{\partial^2 }{\partial z_i\partial z_j}\frac{z^2_iz_j}{(z_i-z_j)^2(z_i-z_k)}\nonumber\\
&&-6\sum_{i=1}^N\sum_{l\neq k\neq j\neq i}\frac{\partial }{\partial z_i}
\frac{z_iz_l(z_jz_k-z_iz_l)}{(z_i-z_j)(z_i-z_k)(z_l-z_j)(z_l-z_k)(z_i-z_l)}\nonumber\\
&&+6\sum_{i=1}^N\sum_{k\neq j\neq i}\frac{\partial^2 }{\partial z^2_i}
\frac{z^2_iz_j(z_i+z_j-2z_k)}{(z_i-z_j)^2(z_i-z_k)(z_j-z_k)}+2H_{2}^{(1)}.
\end{eqnarray}

By inserting the operators (\ref{Hnm}) into the integrand of (\ref{PF}), we obtain the constraints
\begin{equation}\label{Hnmcons}
{\hat H}_{-n}^{(m)}Z=\int d^Nz\ \ H_n^{(m)}\Delta(z)^2 e^{\sum_{i=1}^{N}V(z_i)}=0,
\ \ \ n,m \in \mathbb{Z}_{+},
\end{equation}
where
\begin{equation}\label{adhatw-rr}
{\hat H}_{-n}^{(m)}=\frac{1}{(n-1)!}{\rm ad}^{n-1}_{{\hat F}_{m+1}}{\hat F}_{m},
\end{equation}
${\hat F}_{m}={\rm ad}^{m-1}_{\frac{1}{2}{\hat W}_0}{{\hat F}}_{1}$, the operators
${\hat F}_1$ and ${\hat W}_0$ are respectively given by
\begin{eqnarray}\label{F1W0}
{\hat F}_1&=&{\hat L}_{-1}=\sum_{k=1}^{\infty}kt_k\frac{\partial}{\partial t_{k-1}},\nonumber\\
{\hat W}_0&=&\sum_{k=1}^{\infty}kt_k\hat L_{k}\nonumber\\
&=&\sum_{k,l=1}^{\infty}klt_kt_l\frac{\partial}{\partial t_{k+l}}
+\sum_{k=1}^{\infty}\sum_{l=0}^{k}kt_k\frac{\partial}{\partial t_l}\frac{\partial}{\partial t_{k-l}}.
\end{eqnarray}

From (\ref{adhatw-rr}) and (\ref{F1W0}), we find that the constraint operators ${\hat H}_{-n}^{(m)}$ are reducible to the Virasoro constraint operators (\ref{virconop}).

For examples,
\begin{eqnarray}
{\hat H}_{-1}^{(2)}&=&{\hat F}_2=\sum_{k=1}^{\infty}kt_k\hat L_{k-1},\nonumber\\
{\hat H}_{-1}^{(3)}&=&{\hat F}_3=\sum_{k,l=0}^{\infty}kt_k\frac{\partial}{\partial t_{l}}\hat L_{k-l-1}
+\sum_{k,l=1}^{\infty}klt_kt_l\hat L_{k+l-1},\nonumber\\
{\hat H}_{-2}^{(1)}&=&\sum_{k=1}^{\infty}kt_k\hat L_{k-2},\nonumber\\
{\hat H}_{-2}^{(2)}&=&\sum_{k,l=1}^{\infty}klt_k\hat L_{k-l-1}\hat L_{l-1}
-2\sum_{k,l,p=0}^{\infty}klt_k\frac{\partial}{\partial t_{l}}\frac{\partial}{\partial t_{p}}\hat L_{k-l-p-2}\nonumber\\
&&+\sum_{k,l=1}^{\infty}kl(2l-k+1)t_k\hat L_{k-2}+\sum_{k,l,p=0}^{\infty}kl(p-k+3)t_kt_l
\frac{\partial}{\partial t_{p}}\hat L_{k+l-p-2}\nonumber\\
&&+\sum_{k,l,p=1}^{\infty}klpt_kt_lt_p\hat L_{k+l+p-2}.
\end{eqnarray}

Let us take the rescaling variables $p_k=kt_k$ $(k>0)$ and substitute
$\frac{\partial}{\partial t_0}$ by $N$ in (\ref{adhatw-rr}), then they become
\begin{equation}\label{padhatw-rr}
{\hat H}_{-n}^{(m)}\{p\}=\frac{1}{(n-1)!}{\rm ad}^{n-1}_{{\hat F}_{m+1}\{p\}}{\hat F}_{m}\{p\},
\end{equation}
where ${\hat F}_{m}\{p\}={\rm ad}^{m-1}_{\frac{1}{2}{\hat W}_0\{p\}}{{\hat F}}_{1}\{p\}$, the operators
${\hat F}_1\{p\}$ and ${\hat W}_0\{p\}$ are respectively given by
\begin{eqnarray}\label{pF1W0}
{\hat F}_1\{p\}&=&\sum_{k=1}^{\infty}kp_{k+1}\frac{\partial}{\partial p_{k}}+Np_1,\nonumber\\
{\hat W}_0\{p\}&=&\sum_{k,l=1}^{\infty}\big((k+l)p_{k}p_l
\frac{\partial}{\partial p_{k+l}}+klp_{k+l}\frac{\partial}{\partial p_k}\frac{\partial}{\partial p_l}\big)+2N\sum_{k=1}^{\infty}kp_{k}\frac{\partial}{\partial p_{k}}.
\end{eqnarray}

The intriguing result is that the operators ${\hat H}_{-n}^{(m)}\{p\}$
can be used to generate the negative branch of the partition functions \cite{wangliu,2301.04107,1405.1395}
\begin{eqnarray}\label{z-m}
Z_{-}^{(m)}
&=&{\rm exp}(\sum_{k=1}^{\infty}\frac{g_k{\hat H}_{-k}^{(m)}\{p\}}{k})\cdot 1\nonumber\\
&=&\sum_{\lambda}\left(\frac{S_{\lambda}\{p_k=N\}}{S_{\lambda}\{p_k=\delta_{k,1}\}}\right)^m
S_{\lambda}\{p\}S_{\lambda}\{g\}\nonumber\\
&=&\prod_{l=1}^m\int\int_{N\times N} dX_ldY_l {\rm exp}(-\sum_{j=1}^m{\rm Tr}X_jY_j+\sum_{k=1}^{\infty}\frac{1}{k}\sum_{j=1}^{m-1}
{\rm Tr}Y_j^k{\rm Tr}X_{j+1}^k)\nonumber\\
&&\times {\rm exp}(\sum_{k=1}^{\infty}\frac{1}{k}(p_k{\rm Tr}X_1^k+g_k{\rm Tr}Y_m^k)),
\end{eqnarray}
where $S_{\lambda}$ is the Schur function associated with the partition $\lambda$,
$\frac{S_{\lambda}\{p_k=N\}}{S_{\lambda}\{p_k=\delta_{k,1}\}}=\prod_{(i,j)\in \lambda}(N+j-i)$, $X_l$ and $Y_l$, $l=1,2,\cdots,m$, are Hermitian and anti-Hermitian $N\times N$ matrices, respectively.
When particularize to $m=1$, $g_k=\delta_{k,2}$ and $m=2$, $g_k=\delta_{k,1}$, (\ref{z-m}) reduce to the Gaussian Hermitian \cite{Shakirov2009,AMironov1705}
and $N\times N$ complex matrix models \cite{AlexandrovJHEP2009,AMironov1705}, respectively.

Let us construct the $r$-th total derivative operators
\begin{eqnarray}\label{deroperators}
W_{n}^{r}&:=&\sum_{k=0}^{r-1}(-1)^kC_{r}^k
\sum_{i=1}^N
\frac{\partial^{r-k} }{\partial z_i^{r-k}}z_i^{n+r}\Delta^{-1}
\frac{\partial^{k} \Delta}{\partial z_i^{k}}\nonumber\\
&=&\sum_{k=0}^{r-1}(-1)^kC_{r}^k
\sum_{i=1}^N\sum_{j_1\neq \cdots \neq j_k\neq i}
\frac{\partial^{r-k} }{\partial z_i^{r-k}}
\frac{z_i^{n+r}}{(z_i-z_{j_1})\cdots (z_i-z_{j_k})},
\end{eqnarray}
where $r\geq 1$, $n\geq -r$ and $C_{r}^k=\frac{r!}{k!(r-k)!}$.

The first few operators are
\begin{eqnarray}
W_{n}^{1}&=&\sum_{i=1}^{N}\frac{\partial}{\partial z_i}z_i^{n+1},\nonumber\\
W^2_{n}&=&\sum_{i=1}^{N}\frac{\partial^2}{\partial z^2_i}z^{n+2}_i
-2\sum_{i=1}^{N}\sum_{j\neq i}\frac{\partial}{\partial z_i}\frac{z^{n+2}_i}{z_i-z_j},\nonumber\\
W_{n}^{3}&=&\sum_{i=1}^N\frac{\partial^3 }{\partial z_i^3}z_i^{n+3}
-3\sum_{i=1}^N\sum_{j\neq i}\frac{\partial^2 }{\partial z_i^2}\frac{z_i^{n+3}}{z_i-z_j}+3\sum_{i=1}^N\sum_{k\neq j\neq i}\frac{\partial }
{\partial z_i}\frac{z_i^{n+3}}{(z_i-z_j)(z_i-z_k)}.
\end{eqnarray}
We see that
\begin{eqnarray}
F_2&=&W_{-1}^{2}-W_{-1}^{1},\nonumber\\
F_3&=&W_{-1}^{3}-3W_{-1}^{2}+W_{-1}^{1},\nonumber\\
H_{r}^{(1)}&=&W_{-r}^{r}=\sum_{k=0}^{r-1}(-1)^kC_{r}^k
\sum_{i=1}^N\sum_{j_1\neq \cdots \neq j_k\neq i}
\frac{\partial^{r-k} }{\partial z_i^{r-k}}
\frac{1}{(z_i-z_{j_1})\cdots (z_i-z_{j_k})}.\label{Hr1}
\end{eqnarray}

Notice that under the transformation $(-1)^{k}\frac{\partial^{r-k} }{\partial z^{r-k}}f(z)\rightarrow f(z)\frac{\partial^{r-k} }{\partial z^{r-k}}$,
the operators (\ref{Hr1}) become
\begin{eqnarray}\label{hh}
{\bar H}_{r}^{(1)}&=&\sum_{k=0}^{r-1}C_{r}^k
\sum_{i=1}^N\sum_{j_1\neq \cdots \neq j_k\neq i}
\frac{1}{(z_i-z_{j_1})\cdots (z_i-z_{j_k})}\frac{\partial^{r-k} }{\partial z_i^{r-k}}\nonumber\\
&=&\sum_{i=1}^{N}D^r_i,
\end{eqnarray}
where the operators $D_i$ are given by
$D_i=\frac{\partial }{\partial z_i}+\sum_{j\neq i}\frac{1}{z_i-z_j}$ \cite{2306.06623}. The operators
$\Delta \circ {\bar H}_r^{(1)}\circ \Delta^{-1}$ describe free particles, which are associated
with the so called free fermion point of the Calogero system \cite{2306.06623,2303.05273}.

There is a similar expression for $W_{-r}^{r}$, i.e.,
\begin{equation}
W_{-r}^{r}=\sum_{i=1}^{N}{\tilde D}^r_i,
\end{equation}
where $\tilde D_i=\frac{\partial }{\partial z_i}-\sum_{j\neq i}\frac{1}{z_i-z_j}$.

In general, the operators (\ref{deroperators}) can be expressed as
\begin{equation}\label{wnr1}
W_{n}^{r}=\sum_{i=1}^{N}{\tilde D}^r_iz^{n+r}_i-(-1)^r\sum_{i=1}^N\sum_{j_1\neq \cdots \neq j_r\neq i}
\frac{z^{n+r}_i}{(z_i-z_{j_1})\cdots (z_i-z_{j_r})}.
\end{equation}

For the convenience of later discussions, we give the commutators
\begin{eqnarray}\label{wwcomm}
[W^1_n, W^r_m]&=&(m-rn)W_{m+n}^r+\sum_{k=2}^{r}(-1)^{k}
C^{k}_rA^{k}_{n+1}(1+\frac{1}{r+1-k})W_{m+n}^{r+1-k}\nonumber\\
&&+\sum_{l=2}^{r-1}(-1)^{l}\sum_{k_1=1}^{n-1}\sum_{k_2=0}^{k_1-1}\cdots\sum_{k_l=0}^{k_{l-1}-1}
(n-k_1)A^{l}_{r}W_{m+n-k_l}^{r-l}W_{k_l}^0\nonumber\\
&&-r\sum_{k=0}^{n-1}(n-k)W_{m+n-k}^{r-1}W_{k}^0,
\end{eqnarray}
where $A_n^k=n(n-1)\cdots (n-k+1)$ and $W_{n}^0=\sum_{i=1}^Nz_i^n$.

Let us list several commutators of (\ref{wwcomm})
\begin{align}
[W^1_n, W^2_m]=&(m-2n)W^2_{m+n}+2n(n+1)W^1_{m+n}-2\sum_{k=0}^{n-1}(n-k)W_{m+n-k}^1W_k^0,
\nonumber\\
[W^1_n, W^3_m]=&-2(n+1)n(n-1)W^1_{m+n}+\frac{9}{2}n(n+1)W^2_{m+n}+(m-3n)W^3_{m+n}\nonumber\\
&-3\sum_{k=0}^{n-1}(n-k)W_{m+n-k}^2W_k^0+6\sum_{l=0}^{n-1}\sum_{k=0}^{l-1}(n-l)
W_{m+n-k}^1W_k^0,\nonumber\\
[W^1_n, W^4_m]=&2(n+1)n(n-1)(n-2)W^1_{m+n}-6(n+1)n(n-1)W^2_{m+n}+8(n+1)nW^3_{m+n}\nonumber\\
&+(m-4n)W^4_{m+n}-4\sum_{k=0}^{n-1}(n-k)W_{m+n-k}^3W_k^0
+12\sum_{k_1=0}^{n-1}\sum_{k_2=0}^{k_1-1}
(n-k_1)W_{m+n-k_2}^2W_{k_2}^0\nonumber\\
&-24\sum_{k_1=0}^{n-1}\sum_{k_2=0}^{k_1-1}\sum_{k_3=0}^{k_2-1}
(n-k_1)W_{m+n-k_3}^1W_{k_3}^0.
\end{align}

By inserting (\ref{deroperators}) into the partition function (\ref{PF}), we obtain a series of constraints
\begin{equation}\label{wcons1}
\hat W_n^r Z=0,\ \ n\geq -r,
\end{equation}
where the constraint operators $\hat W_n^r$ can be written out explicitly by
using the formula \cite{Itoyama}
\begin{equation}
\Delta^{-1}\sum_{i=1}^{N}\frac{1}{p-z_i}
\frac{\partial^{n} \Delta}{\partial z_i^{n}}=\frac{1}{n+1}
(\frac{\partial }{\partial p}+\sum_{i=1}^{N}\frac{1}{p-z_i})^{n+1}\cdot 1,
\quad n<N.
\end{equation}

For examples,
\begin{eqnarray}
\hat W_n^2&=&\sum_{k=1}^{\infty}kt_k\hat L_{n+k}+(n+2)\hat L_n,\label{hwn2}\\
\hat W_n^3&=&\sum_{k=1}^{\infty}kt_k\hat W^2_{n+k}+\sum_{k=1}^{\infty}k(n+k+2)t_k
\hat L_{n+k}+\frac{3}{2}(n+3)\hat W^2_{n}+\frac{1}{2}(5n^2+24n+29)\hat L_n\nonumber\\
&&+\frac{1}{2}\sum_{k=0}^{\infty}k(k^2+kn)t_k\frac{\partial }{\partial t_{n+k}}
+\frac{1}{2}\sum_{k=0}^{n}(k^2+15n+5n^2-30k-11kn)\frac{\partial }{\partial t_{k}}\frac{\partial }{\partial t_{n-k}}\nonumber\\
&&+\frac{3}{2}\sum_{k=0}^{\infty}\sum_{l=0}^{n+k}k(n+k-2l)t_k\frac{\partial }{\partial t_{l}}\frac{\partial }{\partial t_{n+k-l}}
+\sum_{k=0}^{n}\sum_{l=0}^{n-k}(-2n+3k+3l)\frac{\partial }{\partial t_{k}}\frac{\partial }{\partial t_{l}}\frac{\partial }{\partial t_{n-k-l}}
\nonumber\\
&&+\frac{1}{2}\sum_{k=0}^{n}\sum_{l=0}^{n-k}\sum_{p=0}^{n-k-l}\frac{\partial }{\partial t_{k}}\frac{\partial }{\partial t_{l}}
\frac{\partial }{\partial t_{p}}\frac{\partial }{\partial t_{n-k-l-p}}
+\sum_{k=0}^{\infty}\sum_{l=0}^{n+k}\sum_{p=0}^{n+k-l}kt_k\frac{\partial }{\partial t_{l}}\frac{\partial }{\partial t_{p}}
\frac{\partial }{\partial t_{n+k-l-p}}\nonumber\\
&&+\frac{1}{2}\sum_{k=0}^{\infty}\sum_{l=0}^{\infty}\sum_{p=0}^{n+k+l}
klt_kt_l\frac{\partial }{\partial t_{p}}\frac{\partial }{\partial t_{n+k+l-p}}.
\end{eqnarray}

\begin{theorem}\label{thm}
For $r\geq 2$, the constraint operators $\hat W_n^r$ ($n\geq -r$) are reducible to the Virasoro constraint operators (\ref{virconop}).
\end{theorem}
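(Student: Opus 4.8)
The plan is to argue by induction on the order $r$, turning the commutation relations (\ref{wwcomm}) into a recursion that lowers the order of the hatted operators, and feeding everything into a single structural fact about the Virasoro ideal. Two preliminary observations drive the argument. First, $\hat W^1_n=\hat L_n$ (since $W^1_n=\sum_i\partial_{z_i}z_i^{n+1}$ is exactly the operator whose insertion produces the Virasoro constraints), while the pure multiplication operators $W^0_k=\sum_i z_i^k$ insert as $\widehat{W^0_k}=\partial/\partial t_k$ for $k\ge 1$ and as the constant $N$ for $k=0$. Second, let $I$ denote the left ideal generated by $\{\hat L_m:m\ge -1\}$ inside the algebra of differential operators in the $t_k$. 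Because the $\hat L_m$ close into a Lie algebra, $I$ is closed not only under left multiplication by arbitrary $t$-operators but also under commutators with its own elements: for $X=\sum_m D_m\hat L_m\in I$ one computes $[\hat L_n,X]=\sum_m[\hat L_n,D_m]\hat L_m+\sum_m(m-n)D_m\hat L_{n+m}\in I$, and likewise $\hat L_kX\in I$, so brackets and products of elements of $I$ remain in $I$. It is precisely this closure that makes the operators $\hat H^{(m)}_{-n}$ of (\ref{adhatw-rr}) reducible, as they are assembled from $\hat F_1=\hat L_{-1}$ and $\hat W_0=\sum_k kt_k\hat L_k$ by iterated adjoint actions carried out entirely inside $I$; I will reuse this fact below.

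\emph{Base cases and a seed at each order.} The cases $r=1$ and $r=2$ are immediate: $\hat W^1_n=\hat L_n\in I$, and (\ref{hwn2}) exhibits $\hat W^2_n$ as a $t$-linear combination of the $\hat L_m$, hence in $I$. For the inductive step I assume $\hat W^s_n\in I$ for every $1\le s\le r-1$ and every admissible $n$, and I first secure a single reducible operator at order $r$. This is handed to us by the identity $H^{(1)}_r=W^r_{-r}$ in (\ref{Hr1}): inserting it gives $\widehat{W^r_{-r}}=\hat H^{(1)}_{-r}$, which is reducible by the $\hat H$-mechanism just recalled. (The relations $F_2=W^2_{-1}-W^1_{-1}$, $F_3=W^3_{-1}-3W^2_{-1}+W^1_{-1},\dots$ provide an alternative seed at the index $-1$, expressing $\widehat{W^r_{-1}}$ through the reducible $\hat F_r$ and strictly lower-order operators.)

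\emph{Propagating the index.} From the seed I propagate in $n$ using (\ref{wwcomm}). Its leading term being $(m-rn)W^r_{m+n}$, I take $m=-r$ and $n=s\ge 1$, so that $m-rn=-r(1+s)\neq 0$ and
\begin{equation*}
\hat W^r_{-r+s}=\frac{1}{-r(1+s)}\Big(\widehat{[W^1_s,W^r_{-r}]}-(\text{corrections of order }\le r-1)\Big),
\end{equation*}
the subtracted terms being the lower-order $\hat W^{r+1-k}_{-r+s}$ and the products of lower-order operators with the $\widehat{W^0_k}$ displayed in (\ref{wwcomm}). The resonant value $m-rn=0$ never occurs in this sweep, so as $s$ runs over $0,1,2,\dots$ every index $n\ge -r$ is reached. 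The operator $\widehat{[W^1_s,W^r_{-r}]}$ should equal $[\hat L_s,\hat W^r_{-r}]$, which lies in $I$ because $\hat L_s\in I$ and $\hat W^r_{-r}\in I$; by the induction hypothesis the corrections of order $\le r-1$ also lie in $I$. Hence $\hat W^r_{-r+s}\in I$, and the induction closes.

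I expect the genuine difficulty to be concentrated in two compatibility points rather than in the combinatorics above. The first is verifying that the insertion map $\mathcal O\mapsto\hat{\mathcal O}$ really intertwines the $z$-space bracket with the $t$-space one, so that $\widehat{[W^1_s,W^r_{-r}]}$ may be replaced by $[\hat L_s,\hat W^r_{-r}]$; this requires tracking integration by parts and operator ordering through (\ref{wwcomm}). The second, and sharper, point is the fate of the $W^0$-product terms. Since $\partial/\partial t_j$ is \emph{not} itself in $I$ (it does not annihilate $Z$), reducibility forces these products to hat with each $\widehat{W^0_k}=\partial/\partial t_k$ ordered to the \emph{left} of the lower-order factor, or else to cancel among the several $W^0$-families appearing in (\ref{wwcomm}); confirming that the correct ordering and cancellation indeed place every correction term inside the left ideal is the crux of the proof. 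Once this is settled, the induction assembled above yields the reducibility of $\hat W^r_n$ for all $r\ge 2$ and $n\ge -r$.
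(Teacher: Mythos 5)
Your proposal is correct and follows essentially the same strategy as the paper: induction on $r$, the seed $\hat W^r_{-r}=\hat H^{(1)}_{-r}$ coming from (\ref{Hr1}), and propagation in the index $n$ via the commutators (\ref{wwcomm}) with first-order operators, with all correction terms handled by the induction hypothesis. The one genuine difference is the propagation scheme: the paper fixes $n=1$ and steps up one index at a time via (\ref{repnwrm}), which has a resonance at $m=r$ (the coefficient $m-rn$ vanishes) and therefore needs the separate computation with $[{\hat W}^1_2,{\hat W}^r_{r-1}]$ to reach ${\hat W}^r_{r+1}$; your sweep $[W^1_s,W^r_{-r}]$ has leading coefficient $-r(1+s)\neq 0$ for all $s\geq 1$, so it reaches every $n\geq -r$ directly from the seed and the resonance never occurs --- a small but real simplification. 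The two ``compatibility points'' you leave open both resolve in one line: since the $z$-operators commute with the $t$-operators, insertion is an anti-homomorphism, $\widehat{AB}=\hat B\hat A$, so $\widehat{[A,B]}=-[\hat A,\hat B]$ (the sign is irrelevant for membership in the left ideal $I$), and the factors $W^0_k$ standing to the right in (\ref{wwcomm}) automatically hat to $\partial/\partial t_k$ standing to the \emph{left}, exactly as needed for $I$-closure and as visible in the paper's formula for ${\hat W}^r_{r+1}$. One detail worth adding for completeness: in the first correction sum of (\ref{wwcomm}) the terms $W^{r+1-k}_{-r+s}$ with $k\geq s+2$ would have inadmissible indices, but their coefficients $A^k_{s+1}$ vanish precisely in that range, so the sweep is well defined.
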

\begin{proof}
Let us prove the theorem by mathematical induction.
From (\ref{hwn2}), we see that the theorem holds for $r=2$.
Let us assume that the theorem holds for $r-1$, i.e, $\hat W_n^{r-1}$ ($n\geq -r+1$) are reducible to the Virasoro constraint operators (\ref{virconop}).

Taking $n=1$ in (\ref{wwcomm}), we have
\begin{equation}\label{repnwrm}
W^r_{m+1}=\frac{1}{m-r}[W^1_{1}, W^r_m]-\frac{r(r-N)}{m-r}W^{r-1}_{m+1}, \ \ m\neq r.
\end{equation}
Then by inserting (\ref{repnwrm}) into the partition function (\ref{PF}), it is easy to give
\begin{eqnarray}\label{thmeqn1}
{\hat W}^r_{-r+1}&=&\frac{1}{2r}[{\hat W}^1_{1}, {\hat W}^r_{-r}]+\frac{r-N}{2}{\hat W}^{r-1}_{-r+1},\nonumber\\
{\hat W}^r_{-r+2}&=&\frac{1}{2r-1}[{\hat W}^1_{1}, {\hat W}^r_{-r+1}]+\frac{r(r-N)}{2r-1}{\hat W}^{r-1}_{-r+2},\nonumber\\
&\vdots&\nonumber\\
{\hat W}^r_{r-1}&=&\frac{1}{2}[{\hat W}^1_{1}, {\hat W}^r_{r-2}]+\frac{r(r-N)}{2}{\hat W}^{r-1}_{r-1},\nonumber\\
{\hat W}^r_{r}&=&[{\hat W}^1_{1}, {\hat W}^r_{r-1}]+r(r-N){\hat W}^{r-1}_{r}.
\end{eqnarray}
Note that ${\hat W}_{-r}^{r}={\hat H}_{-r}^{(1)}$, they can be reduced to the Virasoro constraint operators (\ref{virconop}).
Then by the inductive assumption and relations (\ref{thmeqn1}), we see the theorem holds for $\hat W_n^r$ ($-r\leq n\leq r$).

Using the commutator $[{W}^{1}_{2}, {W}^{r}_{r-1}]$, we obtain
\begin{equation}
{\hat W}^r_{r+1}=\frac{1}{r+1}\left(r(3r-2N){\hat W}^{r-1}_{r+1}+r(r-1)(N-r+1)
{\hat W}^{r-2}_{r+1}-r{\hat W}^{0}_{1}{\hat W}^{r-1}_{r}+[{\hat W}^{1}_{2}, {\hat W}^{r}_{r-1}]
\right),
\end{equation}
where ${\hat W}^{0}_{1}=\frac{\partial}{\partial t_1}$.
It shows that the theorem holds for ${\hat W}^r_{r+1}$.

Then using the relation (\ref{repnwrm}) for $m\geq r+1$ and inductive assumption, we also confirm that ${\hat W}^{r}_{n}$ ($n>r+1$)
are reducible to the Virasoro constraint operators (\ref{virconop}).
Therefore, the theorem holds for $\hat W_n^r$ ($n\geq -r$).
\end{proof}

Let us turn to the constraints (\ref{dnm}).
The direct calculations show that (\ref{dnm}) can be expressed as
\begin{equation}\label{dnm1}
\tilde W_n^rZ=\int d^Nz\ \ \sum_{l=0}^{r-1}(-1)^lC_r^lA_{n+r}^lW_n^{r-l}(\Delta^2e^V)=0.
\end{equation}
From (\ref{dnm1}), we reach the desired result for the constraint operators $\tilde W_n^r$
\begin{equation}
\tilde W_n^r=\sum_{l=0}^{r-1}(-1)^lC_r^lA_{n+r}^l{\hat W}_n^{r-l}.
\end{equation}
It indicates that the Itoyama-Matsuo conjecture is valid by the Theorem \ref{thm}.

In order to further understand the constraint operators, let us rewrite the total derivatives (\ref{deroperators}) as
\begin{eqnarray}
{W}_{n}^{r}&=&\mathbb{W}_{n}^{r}-\sum_{k=1}^{r-1}(-1)^k(C_{r-1}^k-C_{r-1}^{k-1})
\sum_{i=1}^N
\frac{\partial^{r-k} }{\partial z_i^{r-k}}z_i^{n+r}\Delta^{-1}
\frac{\partial^{k} \Delta}{\partial z_i^{k}}\nonumber\\
&&-2\sum_{k=2}^{r-1}\sum_{l=1}^{\lfloor \frac{k}{2}\rfloor}(-1)^kC_{r-1}^kC_{k}^l\sum_{i=1}^N
\frac{\partial^{r-k} }{\partial z_i^{r-k}}z_i^{n+r}\Delta^{-2}
\frac{\partial^{k-l} \Delta}{\partial z_i^{k-l}}\frac{\partial^{l} \Delta}{\partial z_i^{l}}\nonumber\\
&&+\sum_{k=1}^{\lfloor \frac{r-1}{2}\rfloor}C_{r-1}^{2k}C_{2k}^{k}\sum_{i=1}^N
\frac{\partial^{r-2k} }{\partial z_i^{r-2k}}z_i^{n+r}\left(\Delta^{-1}
\frac{\partial^{k} \Delta}{\partial z_i^{k}}\right)^2,
\end{eqnarray}
where the higher order total derivative operators $\mathbb{W}_{n}^{r}$ are given by
\begin{eqnarray}\label{anwnr}
\mathbb{W}_{n}^{r}=\sum_{k=0}^{r-1}(-1)^kC_{r-1}^k
\sum_{i=1}^N\frac{\partial^{r-k} }{\partial z_i^{r-k}}z_i^{n+r}
\Delta^{-2}
\frac{\partial^{k} \Delta^2}{\partial z_i^{k}}, \ \ r\geq 1, n\geq -r.
\end{eqnarray}

Then by inserting (\ref{anwnr}) into the partition function (\ref{PF}) and using
\begin{eqnarray}
\mathbb{W}_{n}^{r}(\Delta^{2}e^V)
&=&\sum_{k=0}^{r-1}C_{r-1}^kA_{n+r}^k\left(2\sum_{j\neq i}\frac{z_i^{n+r-k}Q_{r-k-1}[\partial_{z_i}V]}{z_i-z_j}
+z_i^{n+r-k}Q_{r-k}[\partial_{z_i}V]
\right.\nonumber\\
&&\left.+(n+r-k)z_i^{n+r-k-1}Q_{r-k-1}[\partial_{z_i}V]\right)\Delta^{2}e^V,
\end{eqnarray}
where $Q_s[f]=(\frac{\partial}{\partial P}+f(P))^s\cdot 1$, we obtain the constraints
\begin{equation}\label{wcons2}
{\mathbb{\hat W}}_{n}^{r}Z=0, \ \ n\geq -r,
\end{equation}
where
\begin{equation}\label{wconsop2}
{\mathbb{\hat W}}_{n}^{r}=\sum_{k=0}^{r-1}C_{r-1}^kA_{n+r}^k\rho(P^{n+r-k}Q_{r-k-1}[j(P)]),
\end{equation}
$j(P)=\sum_{k=1}^{\infty}kt_kP^{k-1}$, $\rho(P^{n+1})={\hat L}_n$.

For examples,
\begin{eqnarray}
\mathbb{\hat W}_n^3&=&\sum_{k,l=1}^{\infty}klt_kt_l{\hat L}_{n+k+l}+2A_{n+3}^1\sum_{k=1}^{\infty}kt_k{\hat L}_{n+k}
+A_{n+3}^2{\hat L}_{n}+\sum_{k=2}^{\infty}k(k-1)t_k\hat L_{n+k},\nonumber\\
\mathbb{\hat W}_n^4&=&A_{n+4}^3{\hat L}_{n}+3A^2_{n+4}\sum_{k=1}^{\infty}kt_k{\hat L}_{n+k}
+3A^1_{n+4}\sum_{k=1}^{\infty}klt_kt_l{\hat L}_{n+k+l}\nonumber\\
&&+3A^1_{n+4}\sum_{k=2}^{\infty}k(k-1)t_k\hat L_{n+k}+\sum_{k_1,k_2,k_3=1}^{\infty}k_1k_2k_3t_{k_1}t_{k_2}t_{k_3}{\hat L}_{n+k_1+k_2+k_3}\nonumber\\
&&+\sum_{k=3}^{\infty}k(k-1)(k-2)t_k\hat L_{n+k}+3\sum_{k,l=1}^{\infty}kl(l-1)t_kt_l{\hat L}_{n+k+l}.
\end{eqnarray}

From (\ref{wconsop2}), we see that the constraint operators ${\mathbb{\hat W}}_{n}^{r}$ are explicitly represented by the Virasoro constraint operators (\ref{virconop}).
However, it is not easy to give the expressions of constraint operators ${\hat W}_n^{r}$ in terms of the Virasoro constraint operators (\ref{virconop}).

\section{Higher order constraints for the $\beta$-deformed Hermitian matrix model}
The $\beta$-deformed matrix model is given by
\begin{equation}\label{DPF}
Z_{\beta}
=\int d^Nz\ \ \Delta(z)^{2\beta} e^{\sum_{i=1}^{N}V(z_i)},
\end{equation}
which satisfies the Virasoro constraints \cite{Awata94}
$$\mathcal {\hat L}_{n}Z_{\beta}=0,\ \ \ n\geq -1,$$
where
\begin{eqnarray}\label{defvirasoro}
\mathcal {\hat L}_{n}=\sum_{k=1}^{\infty}kt_k\frac{\partial}{\partial t_{k+n}}+
\beta\sum_{k=0}^{n}\frac{\partial}{\partial t_{k}}
\frac{\partial}{\partial t_{n-k}}+(1-\beta)(n+1)\frac{\partial}{\partial t_{n}}.
\end{eqnarray}

It was shown that by inserting the $\beta$-deformed total derivative operators \cite{Wangr2022}
\begin{equation}\label{barMn}
\bar {\mathcal{W}}_n=\sum_{i=1}^{N}\frac{\partial^2}{\partial z^2_i}z^{n+2}_i
-2\beta\sum_{i=1}^{N}\sum_{j\neq i}\frac{\partial}{\partial z_i}\frac{z^{n+2}_i}{z_i-z_j}-(n+2)
\sum_{i=1}^{N}\frac{\partial}{\partial z_i}z^{n+1}_i, \ \ n\geq -2,
\end{equation}
into (\ref{DPF}), there are the constraints
\begin{equation}
\hat {\mathcal{W}}_nZ_{\beta}=0,\ \ n\geq -2,
\end{equation}
where
\begin{eqnarray}\label{DMnt}
\hat {\mathcal{W}}_{n}&=&\sum_{k=1}^{\infty}kt_k\mathcal {\hat L}_{n+k}\nonumber\\
&=&\sum_{k,l=1}^{\infty}klt_kt_l\frac{\partial}{\partial t_{k+l+n}}
+\beta\sum_{k=1}^{\infty}\sum_{l=0}^{k+n}kt_k\frac{\partial}{\partial t_l}\frac{\partial}{\partial t_{k-l+n}}\nonumber\\
&&+(1-\beta)\sum_{k=0}^{\infty}kt_k(n+k+1)\frac{\partial}{\partial t_{k+n}}.
\end{eqnarray}

In similarity with the operator $\bar W_0$ (\ref{Mnz0}) case,
for the operator $\bar {\mathcal{W}}_0$ in (\ref{barMn})
\begin{equation}
\bar {\mathcal{W}}_0=\sum_{i=1}^{N}\frac{\partial^2}{\partial z^2_i}z^{2}_i
-2\beta\sum_{i=1}^{N}\sum_{j\neq i}\frac{\partial}{\partial z_i}\frac{z^{2}_i}{z_i-z_j}-2
\sum_{i=1}^{N}\frac{\partial}{\partial z_i}z_i,
\end{equation}
we may construct the commutative operators
\begin{equation}\label{dHnm}
\mathcal{H}_n^{(m)}=\frac{(-1)^{n-1}}{(n-1)!}{\rm ad}^{n-1}_{{\mathcal F}_{m+1}}{\mathcal F}_{m},\ \ \ n,m \in \mathbb{Z}_{+},
\end{equation}
where ${\mathcal F}_{m}={\rm ad}^{m-1}_{-\frac{1}{2}\bar {\mathcal{W}}_0}{\mathcal{F}}_{1}$, $\mathcal{F}_1=\sum_{i=1}^{N}\frac{\partial}{\partial z_i}$.
When $\beta=1$, the operators (\ref{dHnm}) reduce to (\ref{Hnm}).

For examples,
\begin{eqnarray}
{\mathcal H}_{1}^{(2)}&=&{\mathcal F}_2=\sum_{i=1}^{N}\frac{\partial^2}{\partial z^2_i}z_i
-2\beta\sum_{i=1}^{N}\sum_{j\neq i}\frac{\partial}{\partial z_i}\frac{z_i}{z_i-z_j}-{\mathcal F}_1,\nonumber\\
{\mathcal H}_{1}^{(3)}&=&{\mathcal F}_3=\sum_{i=1}^N\frac{\partial^3 }{\partial z_i^3}z_i^{2}
-3\beta\sum_{i=1}^N\sum_{j\neq i}\frac{\partial^2 }{\partial z_i^2}\frac{z_i^{2}}{z_i-z_j}\nonumber\\
&&+3\beta^2\sum_{i=1}^N\sum_{k\neq j\neq i}
\frac{\partial }{\partial z_i}\frac{z_i^{2}}{(z_i-z_j)(z_i-z_k)}\nonumber\\
&&+2\beta(\beta-1)\sum_{i=1}^{N}\sum_{j\neq i}\frac{\partial}{\partial z_i}\frac{z_i}{z_i-z_j}-3\mathcal{F}_2-2\mathcal{F}_1,\nonumber\\
{\mathcal H}_{2}^{(1)}&=&\sum_{i=1}^{N}\frac{\partial^2}{\partial z^2_i}
-2\beta\sum_{i=1}^{N}\sum_{j\neq i}\frac{\partial}{\partial z_i}\frac{1}{z_i-z_j}.
\end{eqnarray}

The operators (\ref{dHnm}) give rise to the constraints for the $\beta$-deformed matrix model (\ref{DPF})
\begin{equation}\label{dHnmcons}
\mathcal{\hat H}_{-n}^{(m)}Z_{\beta}=0,\ \ \ n,m \in \mathbb{Z}_{+},
\end{equation}
where
\begin{equation}\label{dhatHnm}
\mathcal{\hat H}_{-n}^{(m)}=\frac{1}{(n-1)!}{\rm ad}^{n-1}_{{\mathcal {\hat F}}_{m+1}}{\mathcal {\hat F}}_{m},
\end{equation}
${\mathcal {\hat F}}_{m}={\rm ad}^{m-1}_{\frac{1}{2}\mathcal {\hat W}_0}{\mathcal {\hat F}}_{1}$, $\mathcal {\hat F}_{1}=\mathcal {\hat L}_{-1}=\sum_{k=1}^{\infty}kt_k\frac{\partial}{\partial t_{k-1}}$ and
$\hat {\mathcal{W}}_{0}=\sum_{k=1}^{\infty}kt_k\mathcal {\hat L}_{k}$.

Similar to the constraint operators (\ref{adhatw-rr}) of the Hermitian matrix model, the constraint operators (\ref{dhatHnm}) are reducible to the Virasoro constraint operators (\ref{defvirasoro}).

For examples,
\begin{eqnarray}
\mathcal{\hat H}_{-1}^{(2)}&=&\mathcal{\hat F}_2=\sum_{k=1}^{\infty}kt_k\mathcal {\hat L}_{k-1},\nonumber\\
\mathcal{\hat H}_{-1}^{(3)}&=&\mathcal{\hat F}_3=
\beta\sum_{k,l=0}^{\infty}kt_k\frac{\partial}{\partial t_{l}}\mathcal {\hat L}_{k-l-1}
+\sum_{k,l=1}^{\infty}klt_kt_l\mathcal {\hat L}_{k+l-1}\nonumber\\
&&+(1-\beta)\sum_{k=0}^{\infty}(k+1)^2t_{k+1}\mathcal {\hat L}_{k}\nonumber,\\
\mathcal{\hat H}_{-2}^{(1)}&=&\sum_{k=1}^{\infty}kt_k\mathcal {\hat L}_{k-2}.
\end{eqnarray}

When we take the rescaling variables $p_k=\beta^{-1}kt_k$ $(k>0)$, and substitute
$\frac{\partial}{\partial t_0}$ by $N$  in (\ref{dhatHnm}), they become
\begin{equation}\label{pdhatHnm}
\mathcal{\hat H}_{-n}^{(m)}\{p\}=\frac{1}{(n-1)!}{\rm ad}^{n-1}_{{\mathcal {\hat F}}_{m+1}\{p\}}{\mathcal {\hat F}}_{m}\{p\},
\end{equation}
where ${\mathcal {\hat F}}_{m}\{p\}={\rm ad}^{m-1}_{\frac{1}{2}\mathcal {\hat W}_0\{p\}}{\mathcal {\hat F}}_{1}\{p\}$, the operators ${\mathcal {\hat F}}_1\{p\}$ and $\mathcal {\hat W}_0\{p\}$ are respectively given by
\begin{eqnarray}\label{pdF1W0}
\mathcal {\hat F}_1\{p\}&=&\sum_{k=1}^{\infty}kp_{k+1}\frac{\partial}{\partial p_{k}}
+\beta Np_1,\nonumber\\
\hat {\mathcal{W}}_0\{p\}&=&\sum_{k,l=1}^{\infty}\big(\beta(k+l)p_{k}p_l
\frac{\partial}{\partial p_{k+l}}+klp_{k+l}\frac{\partial}{\partial p_k}\frac{\partial}{\partial p_l}\big)\nonumber\\
&&+2\beta N\sum_{k=1}^{\infty}kp_{k}
\frac{\partial}{\partial p_{k}}+(1-\beta)\sum_{k=1}^{\infty}(k+1)kp_k\frac{\partial}{\partial p_{k}}.
\end{eqnarray}

The operators (\ref{pdhatHnm}) coincide with the $W$-operators
${\mathcal{\hat W}}_{-n}^{(m)}(\vec{u})$ with $u_1=N$, $u_2=u_3=\cdots=u_m=N+\beta^{-1}-1$ constructed in Ref.\cite{2301.12763}, which generate the $\beta$-deformed partition functions
\begin{eqnarray}\label{dz-m}
\mathcal{Z}_{-}^{(m)}
&=&{\rm exp}(\beta\sum_{k=1}^{\infty}\frac{g_k\mathcal{\hat H}_{-k}^{(m)}\{p\}}{k})\cdot 1\nonumber\\
&=&\sum_{\lambda}\frac{J_{\lambda}\{p_k=N\}(J_{\lambda}\{p_k=N+\beta^{-1}-1\})^{m-1}}
{(J_{\lambda}\{p_k=\beta^{-1}\delta_{k,1}\})^m}
\frac{J_{\lambda}\{p\}J_{\lambda}\{g\}}{\langle J_{\lambda},J_{\lambda}\rangle},
\end{eqnarray}
where $J_{\lambda}$ is the Jack function associated with the partition $\lambda$,
$\frac{J_{\lambda}\{p_k=u\}}{J_{\lambda}\{p_k=\beta^{-1}\delta_{k,1}\}}=\prod_{(i,j)\in \lambda}(j-1+\beta(u-i+1))$, $\langle J_{\lambda},J_{\lambda}\rangle=\prod_{(i,j)\in \lambda}\frac{\lambda_i-j+1+\beta({\lambda}^{'}_j-i)}{\lambda_i-j+\beta({\lambda}^{'}_j-i+1)}$ and
$\lambda^{'}=(\lambda_1^{'},\lambda_2^{'},\cdots)$ is the conjugate partition of $\lambda$.
The $\beta$-deformed Gaussian Hermitian \cite{Morozov1901} and $N\times N$ complex matrix models \cite{Cheny,Cassia2020} are contained in (\ref{dz-m}).

For the operators
\begin{equation}\label{Hn1}
\mathcal{H}_r^{(1)}=\sum_{k=0}^{r-1}(-\beta)^kC_{r}^k\sum_{i=1}^N\sum_{j_1\neq \cdots \neq j_k\neq i}
\frac{\partial^{r-k} }{\partial z_i^{r-k}}
\frac{1}{(z_i-z_{j_1})\cdots (z_i-z_{j_k})},
\end{equation}
under the transformation $(-1)^k\frac{\partial^{r-k} }{\partial z^{r-k}}f(z)\rightarrow f(z)\frac{\partial^{r-k} }{\partial z^{r-k}}$, they become
\begin{equation}
\mathcal{\bar H}_r^{(1)}=\sum_{k=0}^{r-1}\beta^kC_{r}^k\sum_{i=1}^N\sum_{j_1\neq \cdots \neq j_k\neq i}
\frac{1}{(z_i-z_{j_1})\cdots (z_i-z_{j_k})}\frac{\partial^{r-k} }{\partial z_i^{r-k}}.
\end{equation}
The similarity transformation $\Delta^{\beta}\circ\mathcal{\bar H}_r^{(1)}\circ\Delta^{-\beta}$ gives the rational Calogero-Sutherland Hamiltonians \cite{2306.06623,2301.11877,2303.05273}.

Unlike the case of the Hermitian matrix model, we observe that the extended operators of (\ref{Hn1})
\begin{equation}\label{gHn1}
\sum_{k=0}^{r-1}(-\beta)^kC_{r}^k\sum_{i=1}^N\sum_{j_1\neq \cdots \neq j_k\neq i}
\frac{\partial^{r-k} }{\partial z_i^{r-k}}
\frac{z_i^{n+r}}{(z_i-z_{j_1})\cdots (z_i-z_{j_k})}
\end{equation}
do not give rise to constraints for the $\beta$-deformed matrix model (\ref{DPF}).

To give the higher order constraints for (\ref{DPF}), let us construct the $\beta$-deformed operators
\begin{align}\label{dwnr}
{\mathcal W}_{n}^{r}&=\sum_{k=0}^{r-1}(-1)^kC_{r-1}^k
\sum_{i=1}^N\frac{\partial^{r-k} }{\partial z_i^{r-k}}z_i^{n+r}
\Delta^{-2\beta}
\frac{\partial^{k} \Delta^{2\beta}}{\partial z_i^{k}}\nonumber\\
&=\sum_{k=0}^{r-1}(-1)^kC_{r-1}^k
\sum_{i=1}^N\frac{\partial^{r-k} }{\partial z_i^{r-k}}z_i^{n+r}
\sum_{\lambda=(\lambda_1,\cdots,\lambda_l)\mapsto k}
\frac{C(\lambda)}{z_{\lambda}}\nonumber\\
&\times \sum_{j_1\neq \cdots \neq j_l\neq i}
\frac{k!}{(z_i-z_{j_1})^{\lambda_1}\cdots (z_i-z_{j_l})^{\lambda_l}},
\end{align}
where $r\geq 1$, $n\geq -r$, $\lambda=(\lambda_1,\cdots,\lambda_l)$, $\lambda_1\geq\cdots \geq\lambda_l\geq 1$ is a partition of $k$,
$C(\lambda)=C(\lambda_1)\cdots C(\lambda_l)$ and $C(\lambda_i)=2\beta(2\beta-1)\cdots(2\beta-\lambda_i+1)$,
$z_{\lambda}=\lambda_1!\cdots \lambda_l!\prod_{k}m_k!$ and $m_k=|\{i|\lambda_i=k\}|$.
When $\beta=1$, (\ref{dwnr}) reduce to the total derivative operators (\ref{anwnr}).

The first few members of (\ref{dwnr}) are as follows
\begin{eqnarray}
\mathcal{W}_n^1&=&\sum_{i=1}^{N}\frac{\partial}{\partial z_i}z^{n+1}_i,\nonumber\\
\mathcal{W}_n^2&=&\sum_{i=1}^{N}\frac{\partial^2}{\partial z^2_i}z^{n+2}_i
-2\beta\sum_{i=1}^{N}\sum_{j\neq i}\frac{\partial}{\partial z_i}\frac{z^{n+2}_i}{z_i-z_j},\nonumber\\
{\mathcal W}_{n}^{3}&=&\sum_{i=1}^N\frac{\partial^3 }{\partial z_i^3}z_i^{n+3}
-4\beta\sum_{i=1}^N\sum_{j\neq i}\frac{\partial^2 }{\partial z_i^2}\frac{z_i^{n+3}}{z_i-z_j}+4\beta^2\sum_{k\neq j\neq i}\frac{\partial }
{\partial z_i}\frac{z_i^{n+3}}{(z_i-z_j)(z_i-z_k)}\nonumber\\
&&+2\beta(2\beta-1)\sum_{j\neq i}\frac{\partial }{\partial z_i}\frac{z_i^{n+3}}{(z_i-z_j)^2}.
\end{eqnarray}

In similarity with (\ref{wnr1}), the $\beta$-deformed operators (\ref{dwnr}) can be rewritten as
\begin{eqnarray}\label{drepn1}
\mathcal{W}_{n}^{r}=\sum_{i=1}^N\frac{\partial }{\partial z_i}{\bar D}^{r-1}_iz^{n+r}_i,
\end{eqnarray}
where $\bar D_i=\frac{\partial }{\partial z_i}-2\beta\sum_{j\neq i}\frac{1}{z_i-z_j}$.

Furthermore, when $n\neq -r$ and $r\geq 2$, there are the expressions
\begin{eqnarray}\label{drepn2}
\mathcal{W}_{n}^{r}
=\sum_{k=0}^{r-2}C_{r-1}^kA_{n+r}^k\sum_{i=1}^NL_{n+r-k,i}{\bar D}^{r-1-k}_i
+A_{n+r}^{r-1}{\mathcal W}_{n}^{1},
\end{eqnarray}
where $L_{n+r-k,i}=\frac{\partial }{\partial z_i}z^{n+r-k}_i$.

Since the operators $\bar D_i$ satisfy  $\bar D_i\Delta^{2\beta}=0$, we have the equalities for the power of the Vandermonde determinant from (\ref{drepn1}) and (\ref{drepn2})
\begin{eqnarray}
\mathcal{W}_{-r}^{r}\Delta^{2\beta}&=&0,\ \ r\in  \mathbb{Z}_{+},\nonumber\\
(\mathcal{W}_{n}^{r}-A_{n+r}^{r-1}{\mathcal W}_{n}^{1})\Delta^{2\beta}&=&0,\ \ n\neq -r.
\end{eqnarray}

In addition, for the operators $\mathcal{\bar W}_{n}^{r}=\mathcal{W}_{n}^{r}+{\mathcal W}_{0}^{1}$, we have
\begin{eqnarray}
\mathcal{\bar W}_{0}^{r}\Delta^{2\beta}&=&(r!+1)N(\beta(N-1)+1)\Delta^{2\beta},\nonumber\\
\mathcal{\bar W}_{n}^{r}\varphi_{n,r}&=&N(\beta(N-1)+1)\varphi_{n,r}, \ \ n\neq 0,
\end{eqnarray}
where $\varphi_{n,r}={\rm exp}(-\frac{1}{n}\mathcal{W}_{n}^{r})\Delta^{2\beta}$, and the commutators
\begin{equation}
[\mathcal{W}_{n}^{r},{\mathcal W}_{0}^{1}]=-n\mathcal{W}_{n}^{r}
\end{equation}
have been used.

By inserting ${\mathcal W}_{n}^{r}$ (\ref{dwnr}) into the partition function (\ref{DPF}) and using
\begin{eqnarray}
{\mathcal W}_{n}^{r}(\Delta^{2\beta}e^V)&=&\left(\sum_{k=0}^rC_r^kA_{n+r}^k
\sum_{i=1}^Nz_i^{n+r-k}Q_{r-k}[\partial_{z_i}V]\right.\nonumber\\
&&\left.+2\beta\sum_{k=0}^{r-1}C_{r-1}^kA_{n+r}^k\sum_{j\neq i}\frac{z_i^{n+r-k}Q_{r-k-1}[\partial_{z_i}V]}{z_i-z_j}\right)\Delta^{2\beta}e^V,
\end{eqnarray}
we obtain a series of constraints
\begin{equation}\label{wcons3}
{\mathcal {\hat W}}_{n}^{r}Z_{\beta}=0, \ \ n\geq -r,
\end{equation}
where ${\mathcal{\hat W}}_{n}^{r}$ are reducible to the Virasoro constraint operators (\ref{defvirasoro})
\begin{equation}\label{dWnrcons}
{\mathcal{\hat W}}_{n}^{r}=\sum_{k=0}^{r-1}C_{r-1}^kA_{n+r}^k{\tilde \rho}(P^{n+r-k}Q_{r-k-1}[j(P)]),
\end{equation}
${\tilde \rho}(P^{n+1})={\mathcal {\hat L}}_n$.

\section{Conclusion}
We have constructed three sets of higher order total derivative operators $H_n^{(m)}$ (\ref{Hnm}), $W_n^r$ (\ref{deroperators}) and $\mathbb{W}_{n}^{r}$ (\ref{anwnr}).
By inserting them into the integrand of the partition function for the Hermitian matrix model, the higher order constraints (\ref{Hnmcons}), (\ref{wcons1}) and (\ref{wcons2})
were obtained.  Through rescaling variable transformations for the constraint operators $\hat H_{-n}^{(m)}$ (\ref{adhatw-rr}),
the remarkable property is that these constraint operators  become the $W$-operators of $W$-representations for the partition function hierarchies in Refs. \cite{wangliu,2301.04107}.
For the constraint operators $\hat W_n^r$ in (\ref{wcons1}), we proved that they are reducible to the Virasoro constraint operators (\ref{virconop}). Using the operators $\hat W_n^r$,
the Itoyama-Matsuo conjecture has been proved. For the constraint operators ${\mathbb{\hat W}}_{n}^{r}$ (\ref{wconsop2}), we have presented their expressions in terms of the Virasoro constraint operators (\ref{virconop}).

We have also constructed the $\beta$-deformed higher order total derivative operators $\mathcal{H}_n^{(m)}$ (\ref{dHnm}) and ${\mathcal W}_{n}^{r}$ (\ref{dwnr}), and derived the constraints for the $\beta$-deformed Hermitian matrix model. Similarly, by rescaling variable transformations, we found that the constraint operators $\mathcal{\hat H}_{-n}^{(m)}$ (\ref{dhatHnm}) give the $W$-operators of $W$-representations
for the $\beta$-deformed partition function hierarchies in Ref. \cite{2301.12763}. The expressions of $\beta$-deformed constraint operators ${\mathcal {\hat W}}_{n}^{r}$ (\ref{dWnrcons}) by Virasoro constraint operators (\ref{defvirasoro})
have been provided. For further research, it is worthwhile to investigate the higher order constraints for multi-matrix models.
\section*{Acknowledgements}
I am grateful to Wei-Zhong Zhao, Fan Liu, Jie Yang and Min-Li Li for their helpful discussions. This work is supported by the National Natural Science Foundation of China (No. 12205368)
and the Fundamental Research Funds for the Central Universities, China (No. 2024ZKPYLX01).

\end{document}